\newtheorem{observation}[question]{Observation}
\begin{document}

\pagestyle{headings}  

\title{Optimal Art Gallery Localization is NP-hard\thanks{This work was supported by NSERC.}}
\titlerunning{Optimal Art Gallery Localization is NP-hard}  
%
\author{Prosenjit Bose\inst{1} \and Jean-Lou De Carufel\inst{2} \and Alina Shaikhet\inst{1} \and Michiel Smid\inst{1}}
\authorrunning{P. Bose, J.-L. De Carufel, A. Shaikhet and M. Smid} 

\institute{School of Computer Science, Carleton University, Ottawa, Canada,\\
\email{\{jit, michiel\}@scs.carleton.ca, alina.shaikhet@carleton.ca},\\
\and
School of Electrical Engineering and Computer Science, U. of Ottawa, Canada,\\
\email{jdecaruf@uottawa.ca}}

\maketitle              

\begin{abstract}
\emph{Art Gallery Localization ($AGL$)} is the problem of placing a set $T$ of broadcast towers in a simple polygon $P$ in order for a point to locate itself in the interior. For any point $p \in P$: for each tower $t \in T \cap V(p)$ (where $V(p)$ denotes the \emph{visibility polygon} of $p$) the point $p$ receives the coordinates of $t$ and the Euclidean distance between $t$ and $p$. From this information $p$ can determine its coordinates. We study the computational complexity of $AGL$ problem. We show that the problem of determining the minimum number of broadcast towers that can localize a point anywhere in a simple polygon $P$ is NP-hard. We show a reduction from \emph{Boolean Three Satisfiability} problem to our problem and give a proof that the reduction takes polynomial time.
\keywords{art gallery, trilateration, GPS, polygon partition, localization}
\end{abstract}

\section{Introduction}
\label{sec:introduction}
\pdfbookmark[1]{Introduction}{sec:introduction}

The art gallery problem was introduced by Victor Klee in 1973. He asked how many guards are sufficient to \emph{guard} the interior of a simple polygon having $n$ vertices. It has been shown by Chv{\'a}tal that $\lfloor n/3\rfloor$ guards are always sufficient and sometimes necessary~\cite{Chvatal197539}, and that such a set of guards can be computed easily~\cite{Fisk1978374}. However, such solutions are usually far from optimal in terms of minimizing the number of guards for a particular input polygon. Moreover, it was shown by Lee and Lin in~\cite{Lee:1986:CCA:13643.13657} that determining an optimal set of guards is NP-hard, even for simple polygons. Refer also to the book ``Art Gallery Theorems and Algorithms'' by O'Rourke~\cite{O'Rourke:1987:AGT:40599} that presents more detailed study of the topic. An overview of NP-hardness can be found in the book by Garey and Johnson~\cite{Garey:1979:CIG:578533}.

In our research we combine the art gallery problem with trilateration. \emph{Trilateration} is the process that determines absolute or relative locations of points by measurement of distances, using the geometry of the environment. Trilateration is not only interesting as a geometric problem, but it also has practical applications in surveying and navigation, including global positioning systems (GPS). GPS satellites carry very stable atomic clocks that are synchronized with one another and continuously transmit their current time and position. These signals are intercepted by a GPS receiver, which calculates how far away each satellite is based on how long it took for the messages to arrive. GPS receivers use trilateration to calculate the user's location, based on the information received from different GPS satellites.

By combining the art gallery problem with trilateration we address the problem of placing stationary broadcast \emph{towers} in a simple polygon $P$ in order for a receiving point $p$ (let us call it an \emph{agent}) to locate itself. Towers can be viewed as GPS satellites, while agents can be compared to GPS receivers. Towers are defined as points in a polygon $P$ which can transmit their coordinates together with a time stamp to other points in their visibility polygon. In our context, \emph{trilateration} is a process where the agent can determine its absolute coordinates from the messages the agent receives. The agent receives messages from all the towers that belong to its visibility polygon. Given a message from the tower $t$ the agent can determine its distance to $t$. 

In~\cite{AGLproblem17} we showed how to position at most $\lfloor 2n/3\rfloor$ towers inside $P$ and gave a localization algorithm that receives as input only the coordinates of the towers that can see an agent $p$ together with their distances to $p$. We also showed that $\lfloor 2n/3\rfloor$ towers are sometimes necessary. 

In this paper we show that the problem of determining the minimum number of broadcast towers that can localize a point anywhere in $P$ is NP-hard. Our solution is closely related to the NP-hardness of determining the minimum number of point guards for an $n$-edge simple polygon presented by Lee and Lin in~\cite{Lee:1986:CCA:13643.13657}. To prove the NP-hardness of their problem, Lee and Lin show a reduction from \emph{Boolean Three Satisfiability} (3SAT). However, they do not show that a polynomial number of bits represents the gadgets in their reduction. We use a similar reduction from 3SAT to show that our problem is NP-hard. Moreover, we provide a proof that the reduction takes polynomial time. We demonstrate that the number of vertices of the constructed polygon is polynomial in the size of the given input instance of $3SAT$ and that the number of bits in the binary representation of the coordinates of those vertices is bounded by a polynomial in the size of the input. Refer to Section~\ref{subsec:polynomial}.

In Section~\ref{sec:preliminaries} we give basic definitions and present some properties and observations. In Section~\ref{sec:NP-hard} we present our main results of NP-hardness. Section~\ref{sec:conclusion} contains conclusions and possible ideas for future research. 

\section{Preliminaries}
\label{sec:preliminaries}
\pdfbookmark[1]{Preliminaries}{sec:preliminaries}

Let $P$ be a simple polygon having a total of $n$ vertices on its boundary (denoted by $\partial P$). Two points $u, v \in P$ are \emph{visible to each other} if the segment $\overline{uv}$ is contained in $P$; we also say that $u$ \emph{sees} $v$. Note that $\partial P \subseteq P$ and that $\overline{uv}$ may touch $\partial P$ in one or more points. For $u \in P$ the \emph{visibility polygon} of $u$ (denoted $V(u)$) is the set of all points $q \in P$ that are visible to $u$. Note that $V(u)$ is a star-shaped polygon contained in $P$ and $u$ belongs to its \emph{kernel} (the set of points from which all of $V(u)$ is visible).

Let $T$ be a set of points in $P$. Elements of $T$ are called \emph{towers}. 
For any point $p \in P$: for each $t \in T \cap V(p)$ the point $p$ receives the coordinates of $t$ and can compute the Euclidean distance between $t$ and $p$, denoted $d(t,p)$. 

The set $T$ is said to \emph{trilaterate} the polygon $P$ if for every point $p \in P$ its absolute location can be identified given the information it receives from all towers in $V(p)$. By the \emph{map} of $P$ we denote the complete information about $P$ including the coordinates of all the vertices of $P$ and the vertex adjacency list. We assume in this paper that our localization algorithm knows the map of $P$ and the coordinates of all the towers in $T$.

Let an agent $p$ be a point in the interior of $P$, whose location is unknown. By $C(x,r)$ we denote the circle centered at $x$ with radius $r$. If only one tower $t$ can see $p$ then $p$ can be anywhere on $C(t,d(t,p)) \cap V(t)$, which may not be sufficient to identify the location of $p$ (unless the agent and the tower are at the the same location). Refer to Fig.~\ref{fig:example1}. Notice that one must know the map of $P$ to calculate $V(t)$. Let $L(u, v)$ be the line through points $u$ and $v$. If a pair of towers $t_1$ and $t_2$ can see $p$ then the location of $p$ can be narrowed down to at most two points $C(t_1,d(t_1,p)) \cap C(t_2,d(t_2,p)) \cap V(t_1) \cap V(t_2)$ (which are reflections of each other along $L(t_1, t_2)$). Refer to Fig.~\ref{fig:example2}. In this case we say that there is an \emph{ambiguity} along the line, since we cannot choose one location over the other without any additional information. However, if the map of $P$ is given (and thus we know $V(t_1)$ and $V(t_2)$) and if we place both towers on the same edge of $P$ in $kernel(P) \cap \partial P$ then the intersection $C(t_1,d(t_1,p)) \cap C(t_2,d(t_2,p)) \cap V(t_1) \cap V(t_2)$ is a single point (highlighted in red on Fig.~\ref{fig:example3}). 
\begin{figure}[h]
\centering
\subfigure[]{%
		\label{fig:example1}%
		\includegraphics[height=0.15\textheight]{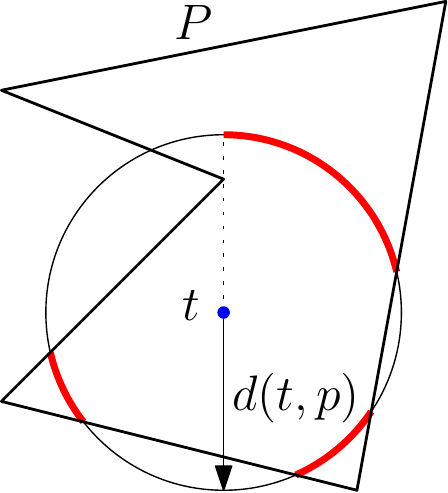}}
\hspace{0.055\textwidth}
\subfigure[]{%
		\label{fig:example2}%
		\includegraphics[height=0.15\textheight]{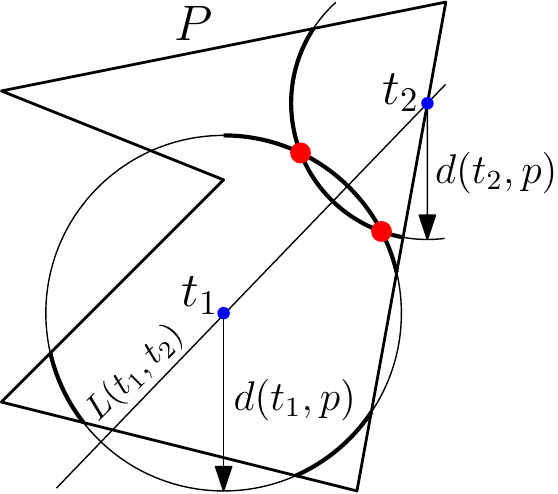}}
\hspace{0.03\textwidth}
\subfigure[]{%
		\label{fig:example3}%
		\includegraphics[height=0.15\textheight]{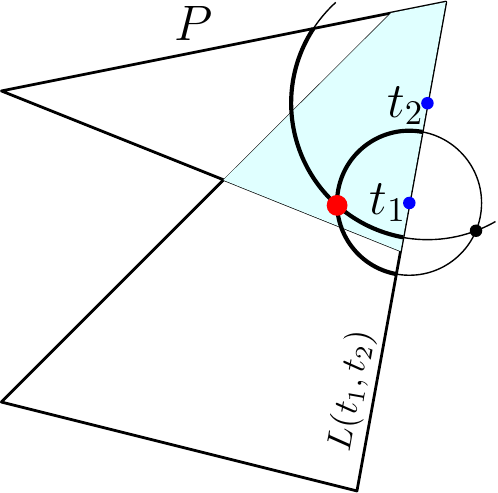}}
\hspace{0.05\textwidth}
\subfigure[]{%
		\label{fig:example4}%
		\includegraphics[height=0.15\textheight]{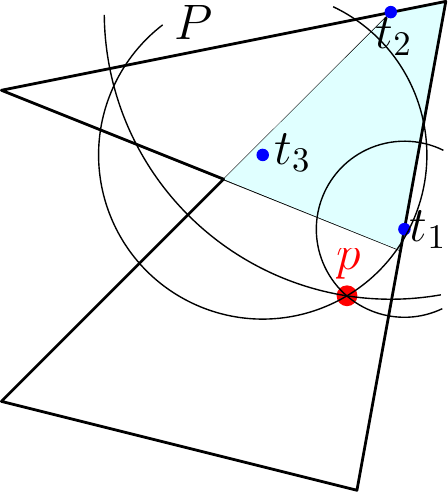}}
\caption{Trilateration example of a star-shaped polygon whose kernel (highlighted in cyan) does not degenerate into a single point. \textbf{(a)} The point $p$ can be anywhere on $C(t,d(t,p)) \cap V(t)$ (highlighted in red). \textbf{(b)} Ambiguity along the line $L(t_1, t_2)$; $p$ can be in one of the two possible locations (highlighted in red). \textbf{(c)} If the map of $P$ is given then the location of $p$ can be disambiguated and identified precisely. \textbf{(d)} The location of $p$ can be identified precisely without any knowledge about $P$.}
\label{fig:example}
\end{figure}
Notice, that in this case the kernel of $P$ must not degenerate into a single point, since we need at least two distinct points for tower placement. Alternatively, if the map of $P$ is \textbf{not} given, we can place a triple of non-collinear towers in the kernel of star-shaped $P$ (highlighted in cyan on Fig.~\ref{fig:example4}) to localize any point interior to $P$.

In~\cite{AGLproblem17} we provide a polynomial time algorithm for trilaterating a simple polygon $P$ in general position using at most $\lfloor 2n/3\rfloor$ towers. We do it by partitioning $P$ into at most $\lfloor n/3\rfloor$ star-shaped polygons $P_1, P_2, \ldots P_k$ such that: $P_1 \cup P_2 \cup \ldots \cup P_k = P$ and $kernel(P_i)$, for every $1 \leq i \leq k$, does not degenerate into a single point; and then we assign a pair of towers to each partition.

The optimization problem we study is the following. Given a simple polygon $P$, compute a set of towers $T$ of minimum size that trilaterates $P$. 

\section{Optimal Art Gallery Localization is NP-hard}
\label{sec:NP-hard}
\pdfbookmark[1]{Optimal Art Gallery Localization is NP-hard}{sec:NP-hard}

We want to show that determining an \emph{optimal solution} (a smallest set of towers) for trilaterating a polygon is NP-hard. To do this, we need to give a reduction from a known NP-hard problem. Similarly to the approach used by Lee and Lin presented in~\cite{Lee:1986:CCA:13643.13657}, we reduce $3SAT$ to our problem.

\medskip
\begin{mdframed}
\textbf{Art Gallery Localization ($AGL$) Problem:}

\emph{Instance}: A simple polygon $P$ of size $n$ and a positive integer $K$. 

\emph{Question}: Does there exist a set $T$ that trilaterates $P$ with $|T| \leq K$? 
\end{mdframed}

\begin{theorem}
\label{theo:NP}
The art gallery localization problem is $NP$-hard.
\end{theorem}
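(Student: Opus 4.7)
The plan is to reduce from 3SAT: given a formula $\phi$ with $m$ variables and $c$ clauses, I would construct in polynomial time a simple polygon $P_\phi$ and an integer $K = K(m,c)$ such that $\phi$ is satisfiable if and only if $P_\phi$ admits a trilaterating tower set of size at most $K$. Since 3SAT is NP-hard, the theorem follows. Containment in NP is not required (the problem statement only asks for NP-hardness), so I can focus entirely on the construction and the if-and-only-if.

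I would model the construction on the Lee--Lin scheme, adapted to the trilateration setting. Their polygon has a variable gadget for each $x_i$ consisting of two narrow spikes that force a single guard into exactly one of two positions, interpreted as $x_i = \textsf{true}$ or $x_i = \textsf{false}$; a clause gadget for each $C_j$ containing a small region visible only from the three literal positions of $C_j$; and connecting corridors that propagate the choice. For AGL I would redesign each pocket so that, by the analysis recalled in Section~\ref{sec:preliminaries}, no single tower suffices to localize all points in it, and the cheapest trilaterating configuration is a pair of towers placed on one of two distinguished edges lying in the kernel of the pocket. Thus each variable gadget costs exactly two towers, the chosen edge encodes the truth assignment, and a clause gadget is trilaterated ``for free'' exactly when at least one of its literal edges has already been chosen by a variable gadget. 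Summing over variable and clause gadgets yields the target $K$.

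Correctness will proceed in both directions. From a satisfying assignment I read off a tower set of size $K$ by placing two towers on the literal-edge of each variable and verifying that every clause pocket is then trilaterated. Conversely, a tower set of size at most $K$ must allocate at least two towers to each variable gadget (because the kernel is the only region that admits pair-trilateration with two towers and the gadget cannot be covered from outside), and then the remaining budget forces each clause pocket to be covered through one of its literal edges, which yields a consistent assignment. The geometric heart of this argument is that the visibility constraints in the pockets prevent any ``cheating'' placement such as a tower in a corridor serving several gadgets at once.

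The main obstacle, and the whole point of writing the reduction out as opposed to quoting Lee--Lin, is Section~\ref{subsec:polynomial}: bounding the bit-complexity of the construction. The number of vertices is clearly $O(m+c)$ times the size of a single gadget, so polynomial is easy; the delicate part is showing that the coordinates can be written with a polynomial number of bits. I plan to lay all gadgets on an integer grid of size polynomial in $m+c$, choose all defining edges with small-integer slopes, and then verify that every vertex produced by intersecting visibility rays from reflex vertices has rational coordinates whose numerator and denominator stay polynomial in magnitude. This requires a careful inductive audit of the construction, since a naive placement of $c$ clause gadgets cascading through $m$ variable gadgets can make denominators double at each step, and closing this gap is precisely the contribution the introduction highlights as missing from the original Lee--Lin argument.
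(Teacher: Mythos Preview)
Your high-level strategy---reduce from 3SAT, force gadgets to require \emph{pairs} of towers rather than single guards, and audit the bit-complexity of the coordinates---matches the paper's. However, the construction you sketch is structurally different from the one the paper actually carries out, and the difference is not cosmetic.

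You follow Lee--Lin almost verbatim: one variable gadget per $x_i$ whose two towers encode the assignment, and a clause pocket that is trilaterated ``for free'' by the towers of any of its true literals. The paper does \emph{not} do this. Instead, the literal patterns live \emph{inside} each clause junction: every clause junction $P_C$ contains four pentagonal literal pockets (three for the literals of $C$ plus one auxiliary $P'$), and every one of those pockets \emph{always} receives its own pair of towers, regardless of truth value. The truth value is encoded not by whether a pocket has towers, but by \emph{which edge} of the pocket they sit on ($\overline{ab}$ versus $\overline{xb}$). Consistency of a variable across clauses is then enforced by a separate variable pattern containing two wells full of spikes, each spike aligned with one literal pocket; the spikes in one well are covered for free exactly when all occurrences of $u_i$ across all clauses are assigned consistently. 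The resulting bound is $K = 8m + 2n + 2$ (eight per clause, two per variable, two global at $w_1,w_2$), not $2\cdot(\text{variables})$ with clauses free.

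The reason the paper goes to this extra trouble is precisely the issue your plan glosses over. A pair of towers sitting on an edge of a variable gadget, when asked to trilaterate a \emph{distant} clause pocket, will in general have the line $L(t_1,t_2)$ cutting through the interior of that pocket, so the two-point ambiguity of Section~\ref{sec:preliminaries} reappears and ``two towers on a kernel edge'' no longer suffices. Your sentence ``the geometric heart of this argument is that the visibility constraints in the pockets prevent any `cheating' placement'' does not address this, and it is exactly here that a naive doubling of Lee--Lin breaks. The paper's construction instead confines every trilateration obligation to a star-shaped piece whose kernel edge it controls, and then disposes of the few residual ambiguities (along $L(t_F',t_F'')$, $L(t_T',t_T'')$, and $L(v_3,v_4)$) by explicit case analysis using the map of $P$ and the known tower coordinates. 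Your plan contains no analogous mechanism, so as written it has a genuine gap; either you must show how a single variable-gadget pair can unambiguously trilaterate every clause pocket it is supposed to serve, or you must restructure along the paper's lines.
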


To show $NP$-hardness we reduce the following $NP$-complete problem to the $AGL$ problem.
 
\begin{mdframed}
\textbf{Boolean Three Satisfiability ($3SAT$) problem:}

\emph{Instance}: A set $U = \{u_1, u_2, \ldots, u_n\}$ of Boolean variables
and a collection $C = \{c_1, c_2, \ldots, c_m\}$ of clauses over
$U$ exist such that $c_i \in C$ is a disjunction ($OR$ or $\vee$) of precisely three
literals (where a literal is either a variable, or the negation of a variable).

\emph{Question}: Does there exist a truth assignment to the $n$ variables in $U$ such that the conjunctive normal form (CNF) $c_1 \wedge c_2 \wedge \ldots \wedge c_m$, evaluates to \emph{true}? 
\end{mdframed}

\medskip
We need to show that $3SAT$ can be transformed into $AGL$ in polynomial time. In other words, the goal is to transform a given instance of $3SAT$ into a simple polygon with $O((nm)^d)$ vertices (for some constant $d$) that can be trilaterated with $K$ or fewer towers \textbf{if and only if} the $3SAT$ instance is satisfiable. Let the bound $K$ for the $AGL$ problem be $8m + 2n + 2$.

We show how to construct a simple polygon in a step-by-step manner by describing the basic components from which the desired polygon is built. We describe a construction for literals, clauses (that contain several literals) and variables (where the consistency of \emph{true}/\emph{false} settings of the literals will be enforced). We want to construct a simple polygon such that no two different constructions can be completely visible to the same pair of towers. We achieve this by gluing together star-shaped polygons whose kernels do not have more than a single point in common.

\subsection{Literal Pattern}
\label{subsec:literal}
\pdfbookmark[2]{Literal Pattern}{subsec:literal}

The construction $P_l$ for a literal $l$ is the five-sided polygonal shape shown in Fig.~\ref{fig:literal1}, which has the property that the vertices $a$, $d$ and $c$ are collinear (refer to Fig.~\ref{fig:literal2}). The construction $P_l$ is attached to the body of the main polygon via a line segment $\overline{e a}$. This connection is shown as a blue dashed line in Fig.~\ref{fig:literal}.

\begin{figure}[h]
\centering
\subfigure[]{%
		\label{fig:literal1}%
		\includegraphics[width=0.114\textwidth]{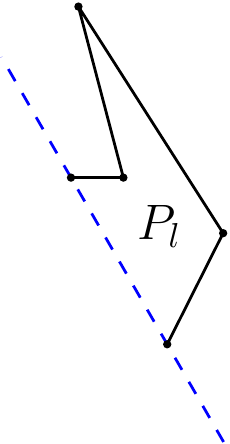}}
\hspace{0.05\textwidth}
\subfigure[]{%
		\label{fig:literal2}%
		\includegraphics[width=0.14\textwidth]{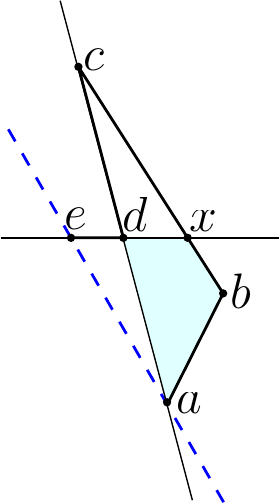}}
\hspace{0.05\textwidth}
\subfigure[]{%
		\label{fig:literal3}%
		\includegraphics[width=0.14\textwidth]{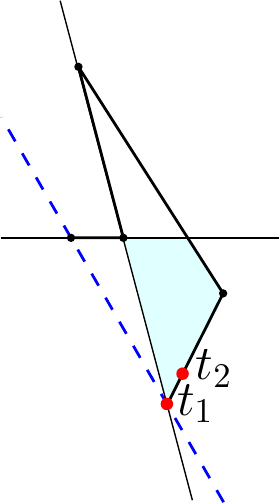}}
\hspace{0.05\textwidth}
\subfigure[]{%
		\label{fig:literal4}%
		\includegraphics[width=0.14\textwidth]{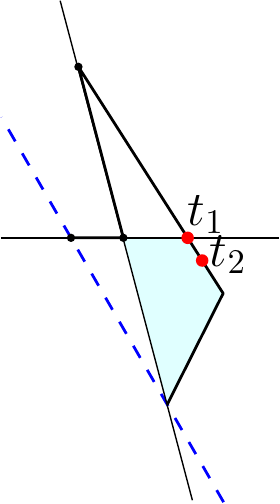}}
\caption{Construction $P_l$ of a literal $l$ (with the kernel highlighted in cyan). The blue dashed line indicates where this construction is attached to the main polygon. \textbf{(b)} The vertices $a$, $d$ and $c$ are collinear. Notice that $x$ is not a vertex of $P_l$. \textbf{(c)} The literal is assigned \emph{true} or  \textbf{(d)} \emph{false} value.}
\label{fig:literal}
\end{figure}

The polygonal region $P_l$ is a star-shaped pentagon whose kernel is not a single point. Moreover, $kernel(P_l) \cap \partial P_l$ contains two distinct points that belong to the same edge of $P_l$. Thus $P_l$ can be trilaterated with as few as two towers since the map of the polygon is given. However, these towers must be in $kernel(P_l)$ and positioned on the same edge of $P_l$. Let $x$ be a point where the extension of the edge $\overline{e d}$ intersects $\overline{c b}$. Refer to Fig.~\ref{fig:literal2}. It follows, that both towers must either be on the line segment $\overline{x b}$ or on the edge $\overline{a b}$.

Notice that a point can see the complete interior of the triangle $\triangle abc$ only if it is in the triangle. In other words, no tower from the outside of $\triangle abc$ can see its complete interior.

Depending on the truth value assigned to the literal $l$, the position of towers for $P_l$ trilateration will be enforced by elements of our construction, which will be described later in Sections~\ref{subsec:clause} --~\ref{subsec:complete}. For now, just note that if the literal $l$ is assigned \emph{true} (respectively \emph{false}) then the towers $t_1$ and $t_2$ are positioned at $\overline{a b}$ (respectively $\overline{x b}$). In particular, $t_1$ is positioned at vertex $a$ (respectively, at point $x$) and $t_2$ is positioned very close to $t_1$ on the corresponding line segment. Refer to Fig.~\ref{fig:literal3} for the \emph{true} assignment and to Fig.~\ref{fig:literal4} for the \emph{false} assignment.

\subsection{Clause Junction}
\label{subsec:clause}
\pdfbookmark[2]{Clause Junction}{subsec:clause}

\begin{wrapfigure}{r}{0.48\textwidth}
\vspace{-20pt}
\centering
\includegraphics[width=0.47\textwidth]{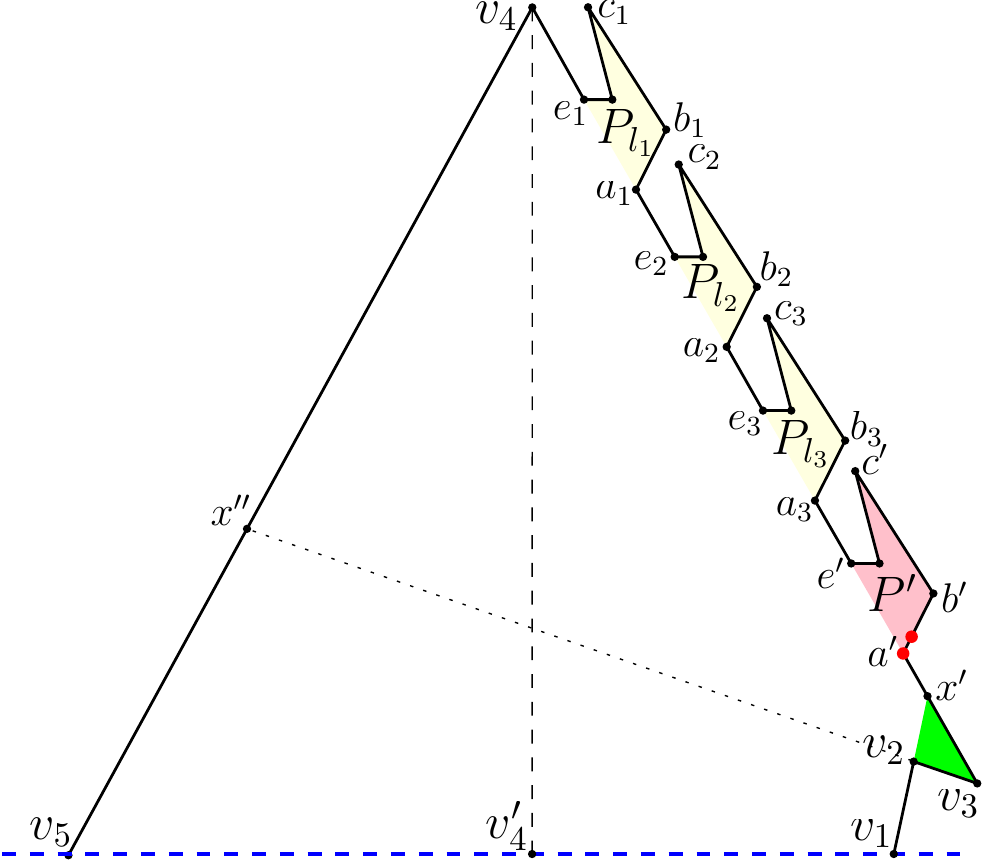}
\caption{Construction $P_C$ of a clause $C$. The blue dashed line indicates where this construction is attached to the main polygon. Fixed towers are highlighted in red.}
\label{fig:clause}
\vspace{-40pt}
\end{wrapfigure}
A construction $P_C$ for each clause $C$ will contain four polygonal regions that represent literals. Three of them will correspond to the literals of the given clause. The role of the fourth polygonal region of a literal form will be explained later in this subsection. Note for now that it will always be assigned a value of \emph{true}.

Consider the clause $C = l_1 \vee l_2 \vee l_3$, where $l_1 \in \{u_i, \overline{u_i}\}$, $l_2 \in \{u_j, \overline{u_j}\}$ and $l_3 \in \{u_k, \overline{u_k}\}$ are literals, and $u_i$, $u_j$ and $u_k$ are variables, $u_i, u_j, u_k \in U$. The basic construction for the clause $C$ is shown in Fig.~\ref{fig:clause}. It contains one subpolygon for each literal: $P_{l_1} = \{a_1, b_1, c_1, d_1, e_1\}$, $P_{l_2} = \{a_2, b_2, c_2, d_2, e_2\}$ and $P_{l_3} = \{a_3, b_3, c_3, d_3, e_3\}$; and an additional subpolygon $P' = \{a', b', c', d', e'\}$ of a literal form. The vertices $v_3$, $a'$, $e'$, $a_3$, $e_3$, $a_2$, $e_2$, $a_1$, $e_1$ and $v_4$ are collinear, i.e. they belong to $L(v_3, v_4)$. 

\begin{observation}
\label{observ:clause}
Every optimal solution for trilaterating $P_C$ consists of $8$ distinct towers with at least two of them positioned at $a_1$, $a_2$, $a_3$ or $a'$. 
\end{observation}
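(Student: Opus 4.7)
The plan is to prove the observation in two parts: first, that any trilateration of $P_C$ uses at least (and hence exactly) $8$ towers, and second, that every optimal placement puts at least two towers at vertices from $\{a_1, a_2, a_3, a'\}$.

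For the lower bound of $8$, I would view $P_C$ as containing four disjoint literal-form subpolygons $P_{l_1}, P_{l_2}, P_{l_3}, P'$, each demanding its own two towers. By Section~\ref{subsec:literal} each such pentagon requires at least two towers to be trilaterated, and those towers must both lie inside the pentagon's triangular tip $\triangle abc$, since only a point inside that triangle sees its entire interior and trilateration requires visibility of every interior point. The four triangular tips are pairwise disjoint by construction, so tower-sets for distinct pentagons cannot overlap, yielding a lower bound of $4 \cdot 2 = 8$. For the matching upper bound I would exhibit an explicit eight-tower placement (two towers per pentagon, on the same kernel edge as mandated by the literal pattern), and check that the narrow remainder of $P_C$ along $L(v_3,v_4)$ is trilaterated by these towers as well, exploiting the collinearity of the attachment vertices $v_3, a', e', a_3, e_3, a_2, e_2, a_1, e_1, v_4$.

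For the second part, I would prove separately that (a)~$P'$ must contribute a tower at $a'$ in every optimum, and (b)~at least one of the three literal pentagons must also contribute a tower at its $a_i$. Statement (a) should follow by showing that the ``false'' choice for $P'$, with both towers on $\overline{x'b'}$, leaves a small region immediately outside $\triangle a'b'c'$ but inside $P_C$ without any valid two-tower certificate, so only the ``true'' choice (one tower at $a'$) survives. For (b), the plan is to pick a witness point $p$ such that, under the hypothesis that all three literals are false, every tower visible from $p$ sits above $L(v_3,v_4)$, either inside some $\triangle a_i b_i c_i$ or just above $a'$ on $\overline{a'b'}$, and every pair of such towers produces the two-point ambiguity of Fig.~\ref{fig:example2} with the mirror image of $p$ again inside $P_C$.

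The main obstacle will be the ambiguity argument in (b): one must pin down the witness $p$ explicitly, enumerate the towers that can reach it through the thin corridors $\overline{e_i a_i}$, and check that for each such pair the reflection of $p$ across $L(t_1,t_2)$ lies inside $P_C$ and is simultaneously visible from both towers. This is where the specific geometry of the clause junction enters crucially, and I expect it to dominate the proof effort; doing it cleanly will likely require fixing explicit coordinates for the clause junction that are also consistent with the polynomial-bit bounds pursued in Section~\ref{subsec:polynomial}.
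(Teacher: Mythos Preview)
Your lower-bound argument for eight towers matches the paper's, and your upper-bound sketch is also in line with it: place two towers per pentagon, with at least two pentagons in the \emph{true} position so that their $a$-vertices sit on the kernel edge $\overline{x'v_4}$ of the background pentagon $P^{*}=\{v_1,\dots,v_5\}$; the paper then checks that the reflection ambiguity across $L(v_3,v_4)$ is resolved using the map.

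The gap is in your second part. You split the claim into (a) every optimum puts a tower at $a'$, and (b) every optimum puts a tower at some $a_i$ with $i\in\{1,2,3\}$. Statement~(a) is stronger than what the observation asserts and is in fact false: if all three literal pentagons are in \emph{true} position (towers at $a_1,a_2,a_3$), then $P'$ may be in \emph{false} position (both towers on $\overline{x'b'}$) and the eight towers still trilaterate $P_C$, because $a_1,a_2,a_3$ already lie on the kernel edge $\overline{x'v_4}$ of $P^{*}$ and handle all of $P^{*}$ without any help from~$P'$. No region ``immediately outside $\triangle a'b'c'$'' is left uncovered in that configuration, so your proposed argument for~(a) cannot go through. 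The observation only asks for two towers among $\{a_1,a_2,a_3,a'\}$, treating $a'$ on an equal footing with the others; your decomposition (a)+(b) does not reflect this.

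The paper's route to ``at least two $a$-vertices'' is shorter and avoids your witness-point/ambiguity machinery entirely. Once the eight towers are pinned, two per pentagon, to the kernels of $P_{l_1},P_{l_2},P_{l_3},P'$, the remaining piece $P^{*}$ must still be trilaterated by these same eight towers. The kernel of $P^{*}$ is the quadrilateral $\{v_2,x',v_4,x''\}$, and the only points where a pentagon kernel meets a boundary edge of $P^{*}$ inside that kernel are exactly $a_1,a_2,a_3,a'$ on $\overline{x'v_4}$. Hence any eight-tower solution must place at least two towers at these four vertices; there is no need to single out $a'$, nor to construct an explicit ambiguous point and chase reflections.
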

\begin{proof}
We showed in Section~\ref{subsec:literal} that each construction $P_l$ requires at least two towers in its kernel. Notice that no tower can see the complete interior of $P_l$ if it does not belong to $P_l$. Since no two pentagons $P_{l_1}$, $P_{l_2}$, $P_{l_3}$ or $P'$ intersect, each of them requires at least two distinct towers. It follows, that we cannot trilaterate $P_C$ with less than $8$ towers, meaning that every optimal solution for trilaterating $P_C$ will be of size at least $8$.

Notice, that $P_C$ is a union of pentagons $P_{l_1}$, $P_{l_2}$, $P_{l_3}$, $P'$ and $P^* = \{v_1, v_2, v_3, v_4, v_5\}$. Let $x'$ be a point where the extension of the edge $\overline{v_1 v_2}$ intersects $\overline{v_3 a'}$. Let $x''$ be an intersection point between the extension of the edge $\overline{v_2 v_3}$ and $\overline{v_4 v_5}$. Notice, that $\{v_2, x', v_4, x''\}$ is the kernel of $P^*$. We can trilaterate $P^*$ with as few as two towers either on $\overline{x' v_4}$ or $\overline{x'' v_4}$. Observe, that $a_1$, $a_2$, $a_3$ and $a'$ belong to $\overline{x' v_4}$. Thus, if any of the pentagons $P_{l_1}$, $P_{l_2}$, $P_{l_3}$ and $P'$ have towers  at $a_1$, $a_2$, $a_3$ or $a'$ then those towers can be reused for trilaterating $P^*$. 

It is left to show that there are no ambiguities along $L(v_3, v_4)$. Assume that only a pair of towers $t_1$ and $t_2$ (positioned at $a_1$, $a_2$, $a_3$ or $a'$) can see the agent with $p_1$ and $p_2$ being agent's possible locations. Assume, without loss of generality, that $p_1 \in P^*$. Recall, that we know the map of the polygon and thus we can see if $p_2$ is inside or outside of $P_C$. If $p_2 \notin P_C$ then the agent is at $p_1$. Otherwise, if $p_2 \in P_C$, then the agent is in one of the pentagons $P_{l_1}$, $P_{l_2}$, $P_{l_3}$ or $P'$. Recall, that we know the locations of all the towers and, together with the map of the polygon, we know visibility polygons of all the towers. We can deduce that $p_2$ does not belong to at lest one of $V(t_1)$ or $V(t_2)$. Thus the agent is at $p_1$.

It follows, that every optimal solution for trilaterating $P_C$ has size $8$ with towers positioned at at least two of the four vertices $a_1$, $a_2$, $a_3$ and $a'$. 
\qed
\end{proof}

Notice, that by construction, any tower that is located outside of $P_C$ (but inside $P \setminus P_C$) can only see a proper subset of $\triangle v_2 v_3 x'$ (highlighted in green in Fig.~\ref{fig:clause}). 

Observation~\ref{observ:clause} allows us to prove the following lemma. But first note, that the \emph{true}/\emph{false} position of towers in $P_{l_1}$, $P_{l_2}$ and $P_{l_3}$ is enforced from the outside of $P_C$ by other elements of our construction, which we describe later in Sections~\ref{subsec:variable},~\ref{subsec:complete}.

\begin{lemma}
\label{lem:clause}
A construction $P_C$ of a clause $C$ can be trilaterated with $8$ towers \textbf{if and only if} a truth value of $C$ evaluates to \emph{true}.
\end{lemma}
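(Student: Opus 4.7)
The plan is to prove the biconditional by combining Observation~\ref{observ:clause} with the standing assumption (from the opening of Section~\ref{subsec:clause}) that the auxiliary pentagon $P'$ is always set to \emph{true}.

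For the ``if'' direction, I would assume that $C$ evaluates to \emph{true} and exhibit an explicit placement of $8$ towers. Pick any literal $l_i$ that is \emph{true}; in its subpolygon $P_{l_i}$ place $t_1$ at $a_i$ and $t_2$ on $\overline{a_i b_i}$ arbitrarily close to $t_1$, using the \emph{true} configuration of Section~\ref{subsec:literal}. For each other literal $l_j$ use the configuration (\emph{true} at $\overline{a_j b_j}$ near $a_j$, or \emph{false} at $\overline{x_j b_j}$ near $x_j$) that matches its truth value, and do the same for $P'$ in the \emph{true} configuration. Section~\ref{subsec:literal} already certifies that each of the four pentagons is individually trilaterated by its two towers; what remains is $P^*$. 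Because $l_i$ and $P'$ are both in the \emph{true} configuration, the two vertices $a_i$ and $a'$ host towers, and both lie on $\overline{x' v_4} \subseteq kernel(P^*)$. The second half of the proof of Observation~\ref{observ:clause} then shows that these two towers trilaterate $P^*$, with the potential $L(v_3,v_4)$-reflection ambiguity resolved using the map of $P$ and the fact that an agent in $P^* \setminus \triangle v_2 v_3 x'$ is distinguishable from any putative twin inside some $P_{l_k}$ or $P'$.

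For the ``only if'' direction, assume that $8$ towers trilaterate $P_C$. Observation~\ref{observ:clause} then forces exactly two towers to lie in each of $P_{l_1}, P_{l_2}, P_{l_3}, P'$, and at least two of those towers to be positioned at vertices among $\{a_1, a_2, a_3, a'\}$. The standing assumption that $P'$ is in the \emph{true} configuration contributes $a'$ as one such vertex, so at least one additional tower must sit at some $a_i$ with $i \in \{1,2,3\}$. By the discussion in Section~\ref{subsec:literal}, a tower can only be placed at $a_i$ when the literal $l_i$ is assigned \emph{true}. Hence at least one of $l_1, l_2, l_3$ is \emph{true}, so $C$ evaluates to \emph{true}.

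The main obstacle I anticipate is not the counting, which Observation~\ref{observ:clause} already handles, but rather a careful justification that no tower outside $P_C$ can substitute for an interior one (so that the tower budget is genuinely tight at $8$) and that the placements I describe leave no unresolved ambiguity between $P^*$ and the triangles $\triangle a_j b_j c_j$ of the \emph{false} literals. Both points follow from the two facts already highlighted in the text: any tower outside $P_C$ sees only a proper subset of $\triangle v_2 v_3 x'$, and the map of $P$ together with the visibility polygons $V(t_1), V(t_2)$ distinguishes any two candidate agent locations that would otherwise coincide under $L(v_3,v_4)$-reflection. With those observations invoked, the reduction to Observation~\ref{observ:clause} is direct and the lemma follows.
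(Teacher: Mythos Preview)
Your proposal is correct and follows essentially the same approach as the paper: both directions reduce to Observation~\ref{observ:clause}, the \emph{if} direction exhibits the same explicit placement (one true literal plus $P'$ in the \emph{true} configuration yielding two towers on $\overline{x'v_4}$), and the \emph{only if} direction reads off a tower at some $a_i$.

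The one minor divergence is in the \emph{only if} direction. You invoke the standing assumption that $P'$ is in the \emph{true} configuration to argue that $a'$ accounts for one of the two required vertices, forcing another at some $a_i$. The paper does not use this assumption here at all: since Observation~\ref{observ:clause} guarantees towers at at least two of the four distinct points $\{a_1,a_2,a_3,a'\}$, pigeonhole alone gives a tower at some $a_i$ with $i\in\{1,2,3\}$. Your route is not wrong, but it is slightly weaker in that it imports an assumption about $P'$ into a direction where the hypothesis is an \emph{arbitrary} $8$-tower trilateration; the paper's argument avoids that dependency entirely.
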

\begin{proof}

\textbf{($\rightarrow$)} 
Let $T$ be a set of towers of size $8$ that trilaterates $P_C$. By observation~\ref{observ:clause} there exists a tower at $a_1$, $a_2$ or $a_3$. It follows, that the corresponding construction $P_l$ has towers in a \emph{true} position, implying that the literal $l$ is assigned \emph{true}. Since $l$ is \emph{true}, the truth value of $C$ is \emph{true}.


\textbf{($\leftarrow$)}
Assume that a truth value of $C$ evaluates to \emph{true}. Therefore, at least one of the literals $l_1$, $l_2$ or $l_3$ is \emph{true}, and thus there exists a tower $t_1$ at $a_1$, $a_2$ or $a_3$. Let us trilaterate $P'$ with a pair of towers in \emph{true} position, meaning that we place tower $t_2$ at $a'$. Each of $P_{l_1}$, $P_{l_2}$, $P_{l_3}$ and $P'$ has two distinct towers. It is $8$ towers total (including $t_1$ and $t_2$). This tower set trilaterates $P_C$ (refer to the proof of observation~\ref{observ:clause} for details).

\qed
\end{proof}

\subsection{Variable Pattern}
\label{subsec:variable}
\pdfbookmark[2]{Variable Pattern}{subsec:variable}

We need to create a construction that will force all truth assignments of literals of a particular variable to be consistent with one another. A variable pattern serves this purpose. An example of the variable pattern $P_i$ for variable $u_i$ is given in Fig.~\ref{fig:variable2} and at the bottom half of Fig.~\ref{fig:variable1}. One such pattern will exist per variable in the final construction. 

\begin{figure}[h]
\centering
\includegraphics[width=0.7\textwidth]{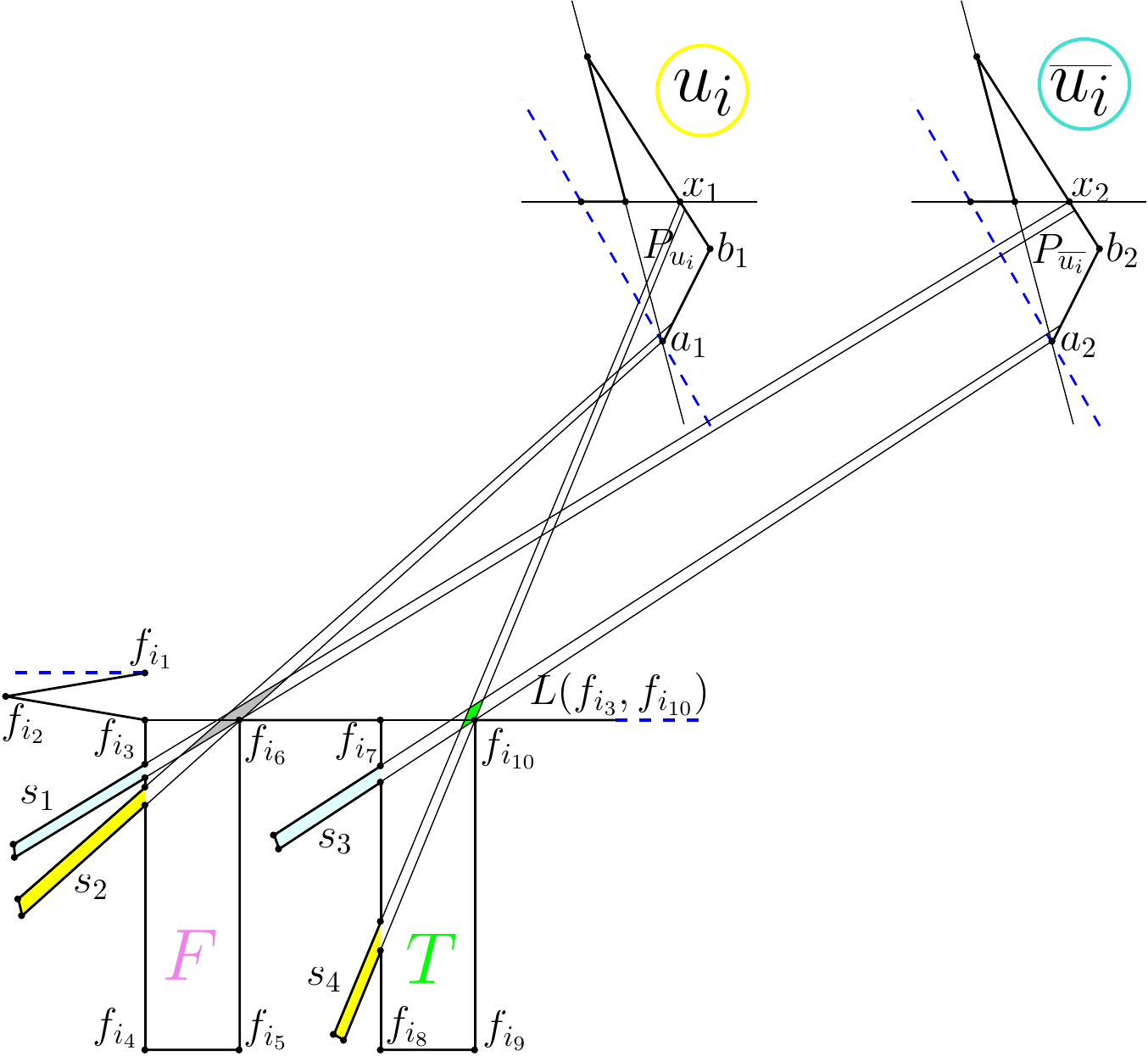}
\caption{An example of a construction $P_i$ for the variable $u_i$. The blue dashed lines indicate where this construction is attached to the main polygon. Literal patterns $P_{u_i}$ and $P_{\overline{u_i}}$ may belong to the same clause or to different clauses. Spikes of the same color are aligned with the same literal pattern.}
\label{fig:variable1}
\end{figure}

Every variable pattern contains two wells. The leftmost well (marked with $F$ in Fig.~\ref{fig:variable1}) will have one tower at the vertex $f_{i_6}$ and another in the close proximity to it (we will discuss later its specific location) if the given variable is assigned the truth value \emph{false}. Let us call this an \emph{F-well}. Similarly, if the variable is assigned the truth value \emph{true} then the rightmost well (marked with $T$ in Fig.~\ref{fig:variable1}) will have one tower at the vertex $f_{i_{10}}$ and another close to it. Let us call this a \emph{T-well}. For simplicity we depicted wells here as vertical rectangles. However, their boundaries are not parallel to each other and in the complete construction are aligned with a pair of vertices whose construction will be defined in Section~\ref{subsec:complete}.

The wells contain \emph{spikes}. The number of spikes each well contains in the final construction is equal to the number of times the corresponding variable or its negation appears in the given $3CNF$ formula. Suppose variable $u_i$ appears in clause $C$. The spikes in the F-well (respectively the T-well) for $u_i$ are aligned with the vertex $f_{i_6}$ (respectively, $f_{i_{10}}$) (and partially with its immediate neighbourhood). 

If $C$, for example, contains the literal $l_1 = u_i$ then the spike in the F-well (respectively, the T-well) that corresponds to $u_i$ in $C$ will also be aligned with $a_1$ (respectively, $x_1$) and its immediate neighbourhood on $\overline{a_1 b_1}$ (respectively, $\overline{x_1 b_1}$) of the literal pattern $P_{l_1}$ in clause junction $P_C$. Those spikes are $s_2$ and $s_4$ and they are highlighted in yellow in Fig.~\ref{fig:variable1}. 

If $C'$, for example, contains the literal $l_2 = \overline{u_i}$ then the spike in the F-well (respectively, the T-well) that corresponds to $u_i$ in $C'$ will also be aligned with $x_2$ (respectively, $a_2$) and its immediate neighbourhood on $\overline{x_2 b_2}$ (respectively, $\overline{a_2 b_2}$) of the literal pattern $P_{l_2}$ in clause junction $P_{C'}$. Those spikes are $s_1$ and $s_3$ and they are highlighted in cyan in Fig.~\ref{fig:variable1}.

Let us discuss ways in which a variable pattern can be trilaterated. Notice, that $P_i$ is a union of two star-shaped polygons $P_i^F = \{f_{i_1}, \ldots f_{i_6}, x_{i_1}\} \cup \{$spikes of the F-well$\}$ and $P_i^T = \{f_{i_1}, f_{i_2}, f_{i_3}, f_{i_7}\ldots f_{i_{10}}, x_{i_2}\} \cup \{$spikes of the T-well$\}$. Refer to Fig.~\ref{fig:variable2}. The kernel of the star-shaped polygon formed by the union of strips defined by the spikes in the F-well (respectively, the T-well) is highlighted in gray (respectively, green) and denoted $V(S_i^F)$ (respectively, $V(S_i^T)$). It is always possible to construct a variable pattern in a way such that $V(S_i^F) \cap V(S_i^T) = \emptyset$. This can be done by moving a T-well further away from an F-well, i.e. by enlarging the distance between $f_{i_6}$ and $f_{i_7}$. This ensures that there does not exist a star-shaped polygon that contains a variable pattern as its subpolygon (connected via the blue dashed lines shown in Fig.~\ref{fig:variable2}). This means that the trilateration of a variable pattern cannot be done with less than 4 towers.

The vertices $f_{i_3}$, $f_{i_6}$, $f_{i_7}$ and $f_{i_{10}}$ are collinear and belong to $L(f_{i_3}, f_{i_{10}})$. Let $L'$ be the line parallel to $L(f_{i_3}, f_{i_{10}})$ such that $f_{i_1} \in L'$. Let $x_{i_1}$ (respectively, $x_{i_2}$) be an intersection point of $L'$ and $L(f_{i_5}, f_{i_6})$ (respectively, $L(f_{i_9}, f_{i_{10}})$). The polygons $P_i^F = \{f_{i_1}, \ldots f_{i_6}, x_{i_1}\} \cup \{$spikes of the F-well$\}$ and $P_i^T = \{f_{i_1}, f_{i_2}, f_{i_3}, f_{i_7}\ldots f_{i_{10}}, x_{i_2}\} \cup \{$spikes of the T-well$\}$ are star-shaped by construction. Notice that $f_{i_6} \in kernel(P_i^F)$ and $f_{i_{10}} \in kernel(P_i^T)$. To trilaterate $P_i^F$ we position a pair of towers $t_F'$ and $t_F''$ in $kernel(P_i^F)$. In particular, we position $t_F'$ at $f_{i_6}$ and $t_F''$ in close proximity to $t_F'$ in $L(f_{i_5}, f_{i_6}) \cap kernel(P_i^F)$. In a similar way, to trilaterate $P_i^T$ we position $t_T'$ at $f_{i_{10}}$ and $t_T''$ in close proximity to $t_T'$ in $L(f_{i_9}, f_{i_{10}}) \cap kernel(P_i^T)$. The locations of the four towers are highlighted in red in Fig.~\ref{fig:variable2}. Notice that both pairs of towers $t_F'$, $t_F''$ and $t_T'$, $t_T''$ can see $\triangle f_{i_1} f_{i_2} f_{i_3}$.

\begin{wrapfigure}{r}{0.62\textwidth}
\vspace{-10pt}
\centering
\includegraphics[width=0.60\textwidth]{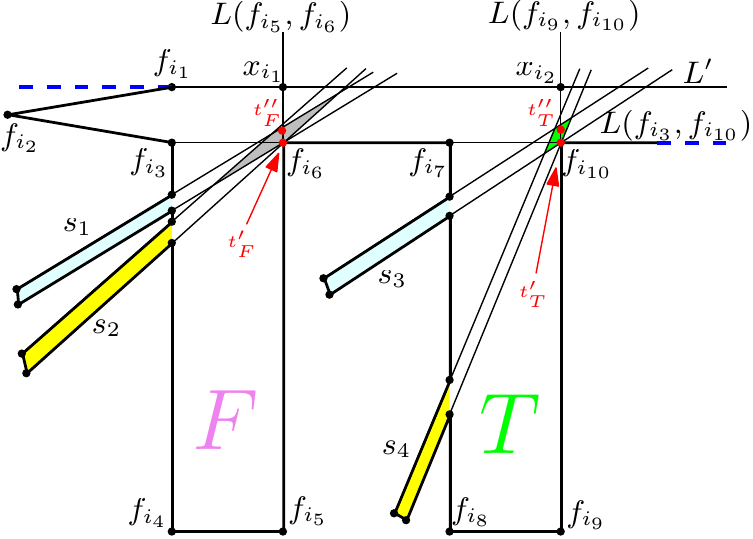}
\caption{Example of a variable pattern $P_i$ for the variable $u_i$.}
\label{fig:variable2}
\vspace{-10pt}
\end{wrapfigure}
It is important to notice that despite that $t_F'$, $t_F''$ are positioned on the boundary of $P_i^F$ and $t_T'$, $t_T''$ are positioned on the boundary of $P_i^T$, the towers $t_F''$ and $t_T''$ do not belong to the boundary of the main polygon. In Section~\ref{subsec:complete}, when we present the complete construction of the main polygon, we show that the ambiguities along the lines $L(t_F', t_F'')$ and $L(t_T', t_T'')$ during the trilateration can be avoided, because the map of $P$ together with the coordinates of all the towers is available to the agent. For now assume that the primary localization region of $t_F'$, $t_F''$ is to the left of $L(t_F', t_F'')$ and that the primary localization region of $t_T'$, $t_T''$ is to the left of $L(t_T', t_T'')$.

We showed how to position $4$ towers to trilaterate a variable pattern. However, only one of the two: $P_i^F$ or $P_i^T$ will require a pair of towers as described in the two previous paragraphs. No other tower will be needed to trilaterate the remainder of the variable pattern if all the literals for this variable are assigned truth values \textbf{consistently}. Consider an example of Fig.~\ref{fig:variable1}. Assume that $u_i$ is assigned the truth value \emph{true}. Then we position towers in $P_i^T$ only. We need to show that $P_i^F$ is trilaterated as well. All the literal patterns of the literals that equal to $u_i$ get a pair of towers in the ``true'' position, namely at and close to $a$-type vertices. Those towers can see inside the corresponding spikes in $P_i^F$. All the literal patterns of the literals that equal to $\overline{u_i}$ get a pair of towers in the ``false'' position, namely at and close to $x$-type vertices. Those towers also can see inside the corresponding spikes in $P_i^F$. Thus, all the spikes in the F-well of  $P_i^F$ are trilaterated. In addition, the towers of $P_i^T$ trilaterate $\{ f_{i_1}, f_{i_2}, f_{i_3}, f_{i_3}, x_{i_1}\}$ -- the subpolygon of $P_i^F$. The only subpolygon of $P_i^F$ that is not trilaterated is the quadrilateral $\{ f_{i_3}, f_{i_4}, f_{i_5}, f_{i_6}\}$. Assume now that $u_i$ is assigned the truth value \emph{false}. We position towers in $P_i^F$ only and need to show that $P_i^T$ is trilaterated too. Similarly to the previous case it can be shown that all the spikes in the T-well of $P_i^T$ are trilaterated. The towers of $P_i^F$ trilaterate $\{ f_{i_1}, f_{i_2}, f_{i_3}, f_{i_3}, x_{i_1}\}$ -- the subpolygon of $P_i^T$. The only subpolygon of $P_i^T$ that is not trilaterated is $\{ x_{i_1}, f_{i_6}, f_{i_7}, f_{i_8}, f_{i_9}, x_{i_2}\}$. In the following section we show a position of a pair of towers that trilaterate all the subpolygons $\{ f_{i_3}, f_{i_4}, f_{i_5}, f_{i_6}\}$ and $\{ x_{i_1}, f_{i_6}, f_{i_7}, f_{i_8}, f_{i_9}, x_{i_2}\}$ for all $1 \leq i \leq n$.

Our discussion can be summarized in the following lemma.
\begin{lemma}
\label{lem:variable}
Let $1 \leq i \leq n$, consider all the spikes of the variable pattern for $u_i$. If $u_i$ is assigned the truth value \emph{true} (respectively \emph{false}) then all the spikes inside the T-well (respectively, the F-well) are trilaterated by a pair of towers assigned to the variable pattern of $u_i$. All the spikes inside the F-well (respectively, the T-well) are trilaterated by pairs of towers assigned to literal patterns of $u_i$ or $\overline{u_i}$ in each clause junction that uses $u_i$.

\end{lemma}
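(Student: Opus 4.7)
The plan is to handle the two truth-value cases symmetrically; I describe the argument for $u_i$ set to \emph{true} and note that the \emph{false} case follows by exchanging the roles of the two wells.

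First I would verify the claim for the T-well spikes. When $u_i$ is \emph{true}, by the construction described in Section~\ref{subsec:variable} we place $t_T'$ at $f_{i_{10}}$ and $t_T''$ on $L(f_{i_9}, f_{i_{10}}) \cap kernel(P_i^T)$ very close to $t_T'$. By construction every spike in the T-well is aligned with $f_{i_{10}}$ (and a small neighbourhood of it on $\overline{f_{i_9} f_{i_{10}}}$), so the strip defined by such a spike contains $t_T'$ and $t_T''$ in its kernel. Thus the pair $\{t_T', t_T''\}$ lies in the kernel of each spike of the T-well together with a neighbourhood of the spike opening, and as in the literal pattern argument (two towers nearly coincident on a common edge), this pair trilaterates each spike up to at most a reflection along $L(t_T', t_T'')$. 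Since the map of $P$ is known, any such reflection that falls outside $P_i$ is discarded; the remaining ambiguities will be ruled out in Section~\ref{subsec:complete}.

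Next I would treat the F-well spikes one spike at a time, indexing them by the clauses that mention $u_i$. Consider an occurrence of the variable in a clause~$C$. By the alignment rules recalled in Section~\ref{subsec:variable} there are exactly two possibilities. If the literal in $C$ is $l = u_i$, then its F-well spike is aligned with the vertex $a_l$ of the literal pattern $P_l \subseteq P_C$; since $u_i$ is \emph{true}, so is $l$, hence by Section~\ref{subsec:literal} the two towers of $P_l$ are placed at $a_l$ and in its immediate neighbourhood on $\overline{a_l b_l}$, which is precisely the kernel of the strip defined by the spike. If the literal is $l = \overline{u_i}$, then the F-well spike for this occurrence is instead aligned with $x_l$; since $\overline{u_i}$ is \emph{false}, the two towers of $P_l$ sit at $x_l$ and close to it on $\overline{x_l b_l}$, again in the kernel of the spike's strip. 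In both cases the literal tower pair plays, for its spike, exactly the same role that $\{t_T', t_T''\}$ played for the T-well spikes, so the same ``two nearly-coincident kernel towers'' argument trilaterates the spike (modulo the reflection ambiguity handled via the map).

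Running over all clauses that use $u_i$ covers every spike of the F-well, and the T-well spikes were handled in the first step, so all spikes of $P_i$ are trilaterated as claimed. The \emph{false} case for $u_i$ is identical with the roles of $F$ and $T$ swapped and with the assignments $u_i \leftrightarrow \overline{u_i}$ swapped in the last case analysis. The main delicate point I expect is the reflection ambiguity: a pair of towers that are almost coincident creates a near-degenerate bisector line, so I would be careful to argue that each potentially ambiguous mirror point either lies outside $P$ or outside the visibility polygon of one of the two towers, both of which can be decided once the map of $P$ and the tower coordinates are known — a verification that is postponed in the paper to Section~\ref{subsec:complete}.
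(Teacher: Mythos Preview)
Your proposal is correct and follows essentially the same approach as the paper: the lemma is stated there as a summary of the preceding discussion, which argues exactly as you do --- the well on the ``assigned'' side is handled by the variable-pattern tower pair sitting in the kernel of every spike, and each spike on the other side is covered by the literal-pattern towers via the $a$-type/$x$-type case split on whether the occurrence is $u_i$ or $\overline{u_i}$. Your treatment is in fact more explicit than the paper's, and you correctly identify that the reflection ambiguity along $L(t_T',t_T'')$ (resp.\ $L(t_F',t_F'')$) is deferred to Section~\ref{subsec:complete}.
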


Notice, that $\triangle f_{i_1} f_{i_2} f_{i_3}$ is trilaterated by a pair of towers assigned to the variable pattern.

\subsection{Complete Construction}
\label{subsec:complete}
\pdfbookmark[2]{Complete Construction}{subsec:complete}

We put variable patterns and clause junctions together as shown in Fig.~\ref{fig:complete}. This figure depicts an example of a complete polygon for a $3CNF$ formula $(u_1 \vee \overline{u_2} \vee \overline{u_3}) \wedge (u_1 \vee u_2 \vee \overline{u_3})$. This formula contains $n = 3$ variables and $m = 2$ clauses. Thus, the main polygon $P$ is comprised of $3$ variable patterns and $2$ clause junctions (each of which contains $4$ literal patterns). 

\begin{figure}[h]
\centering
\includegraphics[width=1\textwidth]{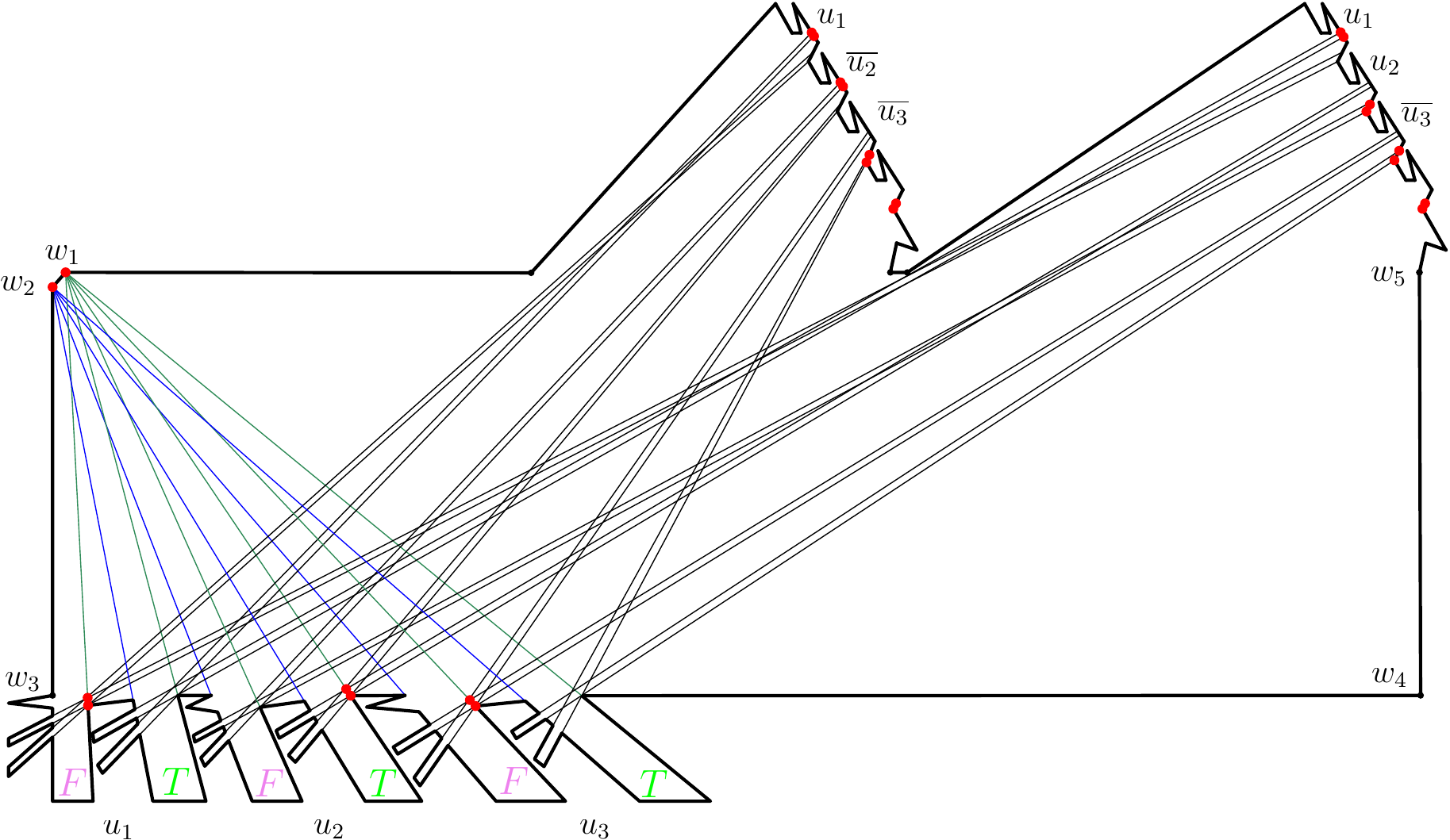}
\caption{The complete polygon $P$ for $(u_1 \vee \overline{u_2} \vee \overline{u_3}) \wedge (u_1 \vee u_2 \vee \overline{u_3})$ with an example of $8m + 2n + 2 = 24$ towers (shown in red) which is minimum. The tower positions correspond to the following truth assignments: $u_1$, $u_3$ are \emph{false}; $u_2$ is \emph{true}.}
\label{fig:complete}
\end{figure}

In Section~\ref{subsec:variable} we described how the consistency of the assigned truth values is enforced among all the literals of a specific variable via spike construction. We also proposed to position only a pair of towers per variable pattern. However, it remained unanswered how to trilaterate all the wells that weren't assigned towers. Now we are able to address the issue. We align the boundaries of the wells for all variable patterns with a pair of vertices of $P$. Refer to Fig.~\ref{fig:complete}. In particular, the left boundaries of both wells $\overline{f_{i_3} f_{i_4}}$ and $\overline{f_{i_7} f_{i_8}}$ for each variable pattern are aligned with the vertex $w_2$ (i.e. $w_2 \in L(f_{i_3}, f_{i_4})$ and $w_2 \in L(f_{i_7}, f_{i_8})$ for $1 \leq i \leq n$). Similarly, the right boundaries of both wells $\overline{f_{i_5} f_{i_6}}$ and $\overline{f_{i_9} f_{i_{10}}}$ for each variable pattern are aligned with the vertex $w_1$. Notice that the pair of vertices $w_1$, $w_2$ can see the interior of all the wells, and in particular, all the polygons $\{ f_{i_3}, f_{i_4}, f_{i_5}, f_{i_6}\}$ and $\{ x_{i_1}, f_{i_6}, f_{i_7}, f_{i_8}, f_{i_9}, x_{i_2}\}$ for all $1 \leq i \leq n$. Let $w$ be an intersection point of $L(f_{1_5}, f_{1_6})$ and $L(f_{n_7}, f_{n_8})$. Notice that $\triangle w w_1 w_3$ is the kernel of the polygon $\{w_1, w_2, w_3, w_4, w_5\} \cup \{$all the wells$\}$. In addition, $\triangle w w_1 w_3 \cap \partial P = \overline{w_1 w_2}$. We position a pair of towers: one at $w_1$ and another at $w_2$. Now all the wells and the polygon $\{w_1, w_2, w_3, w_4, w_5\}$ are trilaterated with a pair of towers. 

Recall, that in Section~\ref{subsec:variable} we positioned the towers $t_F''$ and $t_T''$ \textbf{not} on the boundary of $P$. We show that the ambiguities along the lines $L(t_F', t_F'')$ and $L(t_T', t_T'')$ can be avoided because the map of $P$ together with the coordinates of all the towers are available to the agent. Assume for example that an agent received messages from only two towers $t_F'$ and $t_F''$ of variable pattern for $u_i$ (refer to Fig.~\ref{fig:variable2},~\ref{fig:complete}). Thus, the agent can be in one of the two locations $p_1$ or $p_2$ that are reflections of each other along $L(t_F', t_F''$). We use the map of $P$ to calculate the visibility polygon of the pair of towers $V(t_F') \cap V(t_F'')$. If one of the locations, say $p_2$, does not belong to this visibility polygon, then the agent is at $p_1$. So, assume that both $p_1$ and $p_2$ belong to this visibility polygon. Observe that the agent cannot be in any clause junction $P_C$, otherwise it will be seen by at least a pair of towers positioned inside $P_C$, which contradicts to the fact that the agent is seen by $t_F'$ and $t_F''$ only. Recall, that there is a pair of towers at $w_1$ and $w_2$. The agent knows this information and, together with the map of $P$, can calculate the visibility polygon $V^* = ((V(t_F') \cap V(t_F'')) \setminus (V(w_1) \cap V(w_2))) \setminus \{$all the clause junctions$\}$. This visibility polygon consists only of the spikes of the F-well of variable pattern for $u_i$ and $\triangle f_{i_1} f_{i_2} f_{i_3}$. Moreover, $V^*$ belongs to one side of $L(t_F', t_F''$). Thus, if the agent is seen by $t_F'$ and $t_F''$ only, then it has a unique location at $V^*$. Similarly, if the agent is seen by $t_T'$ and $t_T''$ only then it has a unique location at $((V(t_T') \cap V(t_T'')) \setminus (V(w_1) \cap V(w_2))) \setminus \{$all the clause junctions$\}$.

\begin{lemma}
\label{lem:atleast}
At least $K = 8m + 2n + 2$ towers are required to trilaterate $P$.
\end{lemma}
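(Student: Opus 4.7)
The plan is to decompose any trilaterating tower set $T$ into three pairwise-disjoint subsets — one charged to each clause junction, one charged to each variable pattern, and one charged to the ``spine'' of the polygon — and to give a matching lower bound $8m$, $2n$, and $2$ respectively. Pairwise disjointness of the charges is what converts the three local bounds into the global bound $|T|\ge 8m+2n+2$.

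First I would dispatch the clauses using Observation~\ref{observ:clause}. That observation already says every trilaterating set contains $8$ towers per clause junction $P_C$, and the proof of the observation shows that each of these towers lies strictly inside one of the four literal pentagons of $P_C$ (because no tower outside a literal pentagon sees its full interior). The $4m$ literal pentagons across all clause junctions are pairwise disjoint by construction, so the $m$ groups of $8$ towers are pairwise disjoint as well. This charges $8m$ distinct towers to the clauses.

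Next, for each variable pattern $P_i$ I would argue that at least two further towers are required that cannot be reused from the clause charge. The key is the subregion $\triangle f_{i_1}f_{i_2}f_{i_3}$ at the top of $P_i$: by construction it is invisible from any point inside any literal pentagon (the walls of the clause junctions and the spike geometry block such lines of sight). Since Section~\ref{subsec:variable} already established that $V(S_i^F)\cap V(S_i^T)=\emptyset$, no star-shaped polygon contains $P_i$ as a subpolygon, so a single tower cannot trilaterate $P_i$. The standard one-tower ambiguity argument of Section~\ref{sec:preliminaries} then forces at least two towers into $P_i$; these are charged to variable $u_i$. Since the triangles $\triangle f_{i_1}f_{i_2}f_{i_3}$ for different variables lie in disjoint variable patterns, the $n$ charges are pairwise disjoint and disjoint from the clause charge, contributing $2n$.

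Finally, I would charge $2$ more towers to the spine polygon $\{w_1,w_2,w_3,w_4,w_5\}\cup\{\text{all wells}\}$. This polygon is star-shaped with kernel $\triangle w\, w_1\, w_3$, and $\triangle w\, w_1\, w_3\cap\partial P=\overline{w_1 w_2}$. A standard kernel/ambiguity analysis (as in Section~\ref{sec:preliminaries}) shows that trilaterating this subpolygon requires a pair of towers on $\overline{w_1 w_2}$. Neither of these can be a clause tower (clause towers sit strictly inside literal pentagons, not on $\overline{w_1 w_2}$) nor a variable-pattern tower (those sit at or next to $f_{i_6}$ or $f_{i_{10}}$, not on $\overline{w_1 w_2}$). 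This accounts for the remaining $+2$ and yields $|T|\ge 8m+2n+2$. The hardest step will be the visibility/disjointness argument in the variable-pattern paragraph: I must rule out that a clause-junction tower ``happens'' to see enough of $P_i$ to eliminate the need for the two local towers. I expect to settle this by verifying that the occluding edges introduced by the spike geometry and by the clause-junction walls together isolate $\triangle f_{i_1}f_{i_2}f_{i_3}$ from every point outside $P_i$.
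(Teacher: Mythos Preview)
Your decomposition --- charge $8$ towers to each clause junction, $2$ to each variable pattern, $2$ to the spine, and argue disjointness --- is exactly the paper's approach. The paper's own proof is a three-sentence sketch that simply invokes the three earlier counts and adds them; it never argues disjointness explicitly, so your write-up is in fact more careful than the original on the point you flag as ``hardest''.

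One step does not go through as written. In the spine paragraph you say that the kernel/ambiguity analysis of Section~\ref{sec:preliminaries} ``shows that trilaterating this subpolygon requires a pair of towers on $\overline{w_1 w_2}$''. That analysis only says: \emph{if} a star-shaped polygon is trilaterated by \emph{exactly two} towers, \emph{then} they must sit on a common boundary edge inside the kernel. It does not force every trilaterating set to place two towers on $\overline{w_1 w_2}$; a priori the spine could be covered by several towers scattered elsewhere in $P$, some of which might coincide with already-charged clause or variable towers (indeed, literal-pentagon towers at $a$-vertices do see into the wells, and the well vertices $f_{i_6},f_{i_{10}}$ lie in the spine subpolygon). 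So your disjointness conclusion in that paragraph does not follow from the argument you give. The fix is the same style of argument you already plan for the variable patterns: exhibit a specific region of $\{w_1,\ldots,w_5\}$ (a neighbourhood of the edge $\overline{w_1 w_2}$ works) that is invisible from every literal pentagon and from every point near $f_{i_6},f_{i_{10}}$, and then use the one-tower ambiguity of Section~\ref{sec:preliminaries} to force two fresh towers there. This is a visibility fact about the layout of Fig.~\ref{fig:complete}, not a consequence of the kernel lemma.
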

\begin{proof}
We showed in Section~\ref{subsec:clause} that any trilateration of a clause junction requires at least $8$ towers. In Section~\ref{subsec:variable} we discussed that each variable pattern requires at least $2$ towers. In this section we showed that an additional $2$ towers are necessary to trilaterate $\{w_1, w_2, w_3, w_4, w_5\} \cup \{$all the wells$\}$. Since $P$ is the union of $m$ clause junctions, $n$ variable patterns and the polygon $\{w_1, w_2, w_3, w_4, w_5\}$, then its trilateration requires at least $K = 8m + 2n + 2$ towers.
\qed
\end{proof}

\begin{lemma}
\label{lem:minimum}
The minimum number of towers required to trilaterate the main polygon is $K = 8m + 2n + 2$ \textbf{if and only if} the given $3CNF$ formula is satisfiable.
\end{lemma}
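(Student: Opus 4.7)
The plan is to prove both directions separately, leveraging the per-component lower bound of Lemma~\ref{lem:atleast} and the structural Lemmas~\ref{lem:clause} and~\ref{lem:variable}.

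For the ($\leftarrow$) direction, given a satisfying assignment, I would explicitly build a trilaterating tower set of size exactly $8m+2n+2$: two towers at $w_1$ and $w_2$; for each variable $u_i$, a pair of towers placed in the T-well of $P_i$ if $u_i$ is true and in the F-well otherwise, as described in Section~\ref{subsec:variable}; and for each clause $C$, eight towers inside $P_C$ via Lemma~\ref{lem:clause}, placing the towers of every literal pattern $P_{l_j}$ in the position (at $a_j$ or at $x_j$) that matches the truth value of $l_j$ under the chosen assignment. Trilateration of each clause junction, of $\{w_1,\ldots,w_5\}$, and of the non-spike parts of all wells then follows from Lemma~\ref{lem:clause} and the $w_1, w_2$ argument of Section~\ref{subsec:complete}; trilateration of the remaining spikes in each variable pattern follows from Lemma~\ref{lem:variable}, with ambiguities along $L(t_F', t_F'')$ and $L(t_T', t_T'')$ resolved by the map-based argument already given.

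For the ($\rightarrow$) direction, suppose $T$ trilaterates $P$ with $|T|=K$. Since the lower bound of Lemma~\ref{lem:atleast} is additive over the disjoint pieces (clause junctions, variable patterns, and the central polygon $\{w_1,\ldots,w_5\}$), tightness of $|T|$ forces the budget to be tight inside every piece: exactly $8$ towers per $P_C$, exactly $2$ per $P_i$, and exactly $2$ for the rest. Lemma~\ref{lem:clause} then guarantees that inside every clause junction some literal pattern has its towers placed at an $a_j$-vertex, and the analysis of Section~\ref{subsec:variable} shows that the two towers of $P_i$ lie in exactly one of its two wells. I would then define a truth assignment by $u_i \!=\! \emph{true}$ iff the tower pair of $P_i$ lies in the T-well, and use this to conclude satisfiability.

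The main obstacle, and the step that needs the most care, is proving that the resulting assignment is \emph{consistent} with every literal tower placement, so that each clause's ``true'' literal under Lemma~\ref{lem:clause} is actually true under the assignment. The key geometric observation is that every spike of a well is a narrow corridor seen only from a very restricted set of locations: a spike in the F-well of $P_i$ aligned with an occurrence of $u_i$ (respectively $\overline{u_i}$) in a clause is visible only from a neighbourhood of the corresponding $a$-vertex (respectively $x$-point) of that literal pattern, from the T-well's tower locations, or from $w_1, w_2$ (which, by the alignment of the well boundaries, together see only the non-spike part of the well). Hence when $u_i$ is true and the variable-pattern towers sit in the T-well, every F-well spike must be covered by some literal-pattern tower, which forces every literal equal to $u_i$ to be in true position and every literal equal to $\overline{u_i}$ to be in false position; the symmetric statement holds when $u_i$ is false. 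This consistency, combined with Lemma~\ref{lem:clause}, shows that each clause contains a literal that evaluates to true under the assignment, so the $3SAT$ instance is satisfiable.
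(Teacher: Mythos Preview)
Your $(\leftarrow)$ direction is exactly the paper's argument, only spelled out in more detail.

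For $(\rightarrow)$ you take a genuinely different route. The paper argues by contrapositive in two sentences: if the formula is unsatisfiable, then some clause is false and by Lemma~\ref{lem:clause} its junction needs at least $9$ towers, so the total exceeds $K$. (The paper's phrasing ``there exists at least one clause $C_j$ that has a truth value \emph{false} under any truth assignment'' is, taken literally, wrong---unsatisfiability only says that \emph{every} assignment falsifies \emph{some} clause---so the paper is really leaning on the same enforcement mechanism you make explicit.) You instead argue directly: from a size-$K$ trilaterating set you read off a truth assignment from the well in which each variable pattern's two towers sit, and then prove that this assignment is consistent with the literal-pattern tower positions via the spike-visibility constraints. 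Your route is longer but fills in the step the paper waves past; the paper's route is terser but its correctness silently depends on exactly the consistency argument you supply.

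One slip to fix: in your key observation you write that an F-well spike is visible ``from the T-well's tower locations''. It is not; F-well spikes are aligned with $f_{i_6}$, so they are visible from the \emph{F}-well's tower neighbourhood (near $f_{i_6}$), from the appropriate $a$- or $x$-neighbourhood in the corresponding literal pattern, and not from the interior view of $w_1,w_2$ (which see only the non-spike part of the well). With ``F-well'' in place of ``T-well'' your inference---that when the variable-pattern pair sits in the T-well every F-well spike must be covered from a literal pattern, forcing the claimed consistency---goes through. You should also say a word about why the two $P_i$-towers must sit in (the kernel of) one of the two wells at all: this is because $\triangle f_{i_1}f_{i_2}f_{i_3}$ is invisible from outside $P_i$ and $V(S_i^F)\cap V(S_i^T)=\emptyset$, so two towers that trilaterate the triangle and at least one family of spikes must lie together near $f_{i_6}$ or near $f_{i_{10}}$.
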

\begin{proof}

\textbf{($\leftarrow$)} Assume that the given $3CNF$ formula is satisfiable. Then a truth assignment to the variables exists such that each of the clauses $C_j$, $1 \leq j \leq m$ has a truth value \emph{true}. By Lemma~\ref{lem:clause} every clause junction can be trilaterated with 8 towers. Lemma~\ref{lem:variable} implies that $2$ towers per variable pattern is sufficient to trilaterate all the spikes (that where not trilaterated by the towers in clause junctions) and $\triangle f_{i_1} f_{i_2} f_{i_3}$ for $1 \leq i \leq n$. We showed in this section that an additional $2$ towers trilaterate the remaining uncovered subpolygons of $P$ and resolve ambiguities. Thus $P$ can be trilaterated  with $K = 8m + 2n + 2$ towers, which by Lemma~\ref{lem:atleast} is the minimum number.

\textbf{($\rightarrow$)} Assume that the given $3CNF$ formula is \textbf{not} satisfiable. It follows that there exists at least one clause $C_j$ that has a truth value \emph{false} under any truth assignment. By Lemma~\ref{lem:clause} the clause junction $P_{C_j}$ needs at least $9$ towers. Similar to the proof of Lemma~\ref{lem:atleast} $P$ requires at least $8m + 2n + 3 = K + 1$ towers for trilateration. 
\qed
\end{proof}

\subsection{Construction takes Polynomial Time.}
\label{subsec:polynomial}
\pdfbookmark[2]{Construction takes Polynomial Time.}{subsec:polynomial}

To complete the proof of our main result -- Theorem~\ref{theo:NP}, we need to show that the reduction takes polynomial time. We have to demonstrate that the number of vertices of $P$ is polynomial in $n$ and $m$ and that the number of bits in the binary representation of the coordinates of those vertices is bounded by a polynomial in $n$ and $m$. During the reduction we create a simple polygon $P$ of size $49m + 10n + 3$. Every clause junction consists of $25$ vertices. Every literal creates two spikes, thus the total number of vertices in $P$ occupied by spikes is $24 m$. Every variable pattern (without spikes) consists of $10$ vertices. The final construction of $P$ includes $3$ more vertices for $w_1$, $w_2$ and $w_4$. Thus, the total number of vertices of $P$ is $25m + 24m + 10n + 3 = 49m + 10n + 3$, which is polynomial in $n$ and $m$. Notice that $n \leq 3m$ thus instead of saying ``polynomial in $n$ and $m$'' we can just use ``polynomial in $m$''.

Consider a construction $P_C$ of a clause $C$ (refer to Fig.~\ref{fig:clause}). Let $v_4'$ be an orthogonal projection of $v_4$ onto $L(v_5, v_1)$. We would like to fix the part of $P_C$ to the right of $L(v_4, v_4')$ and keep it identical among all the clauses of the given $3CNF$ formula. Assume that $v_4'$ is the origin of the coordinate system. We can change the positions of all the vertices of $P_C$ to the right of $L(v_4, v_4')$ (keeping collinearities and main features intact) such that the binary representation of their coordinates is polynomial in $m$. The relative position of the vertex $v_5$ to other vertices of $C$ will differ from clause to clause (refer to Fig.~\ref{fig:polynomial}).

\begin{figure}
\centering
\includegraphics[width=1\textwidth]{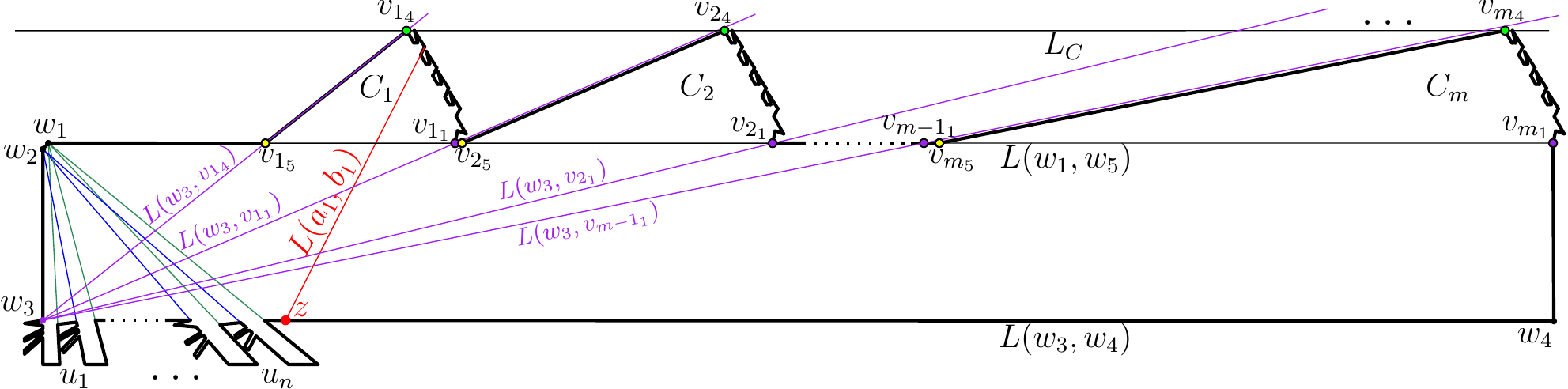}
\caption{Construction of $P$ is polynomial in $m$ and $n$. $w_5 = v_{m_1}$.}
\label{fig:polynomial}
\end{figure}


In the complete construction of $P$ we position $w_3$ at the origin of the coordinate system, i.e. the coordinates of $w_3$ are $(0,0)$. When we add all the variable patterns $P_{u_i}$, $1 \leq i \leq n$ to $P$ we can position every following variable pattern $P_{u_{i+1}}$ slightly below (by $y$-coordinate) $P_{u_i}$, otherwise the towers in $P_{u_{i+1}}$ could see $\triangle f_{i_1} f_{i_2} f_{i_3}$. Alternatively, we can rotate the line $L(f_{i_3}, f_{i_{10}})$ around $f_{i_3}$ to contain $x_{i_2}$ (refer to Fig.~\ref{fig:variable2}) and extend the boundaries of the wells such that $f_{i_{10}} = x_{i_2}$ and $f_{i_6}$, $f_{i_7} \in L(f_{i_3}, f_{i_{10}})$. We will use the latter approach. Now we can set the $y$-coordinate of the vertices $f_{1_1}$, $f_{1_{10}}$, $f_{2_1}$, $f_{2_{10}}$, $\ldots$ , $f_{n_1}$, $f_{n_{10}}$ to $0$ (note that $w_3 = f_{1_1}$).

Let $k_0, k_1, k_2, k_3, k_4, k_5, k_6$ be an arbitrary sorted sequence of positive integers of polynomial size in $m$, such that $k_6 < 2 k_5$. We set the $x$-coordinate of $w_1$ to be $k_1$ and its $y$-coordinate to be $k_5$. In other words, $w_1 = (k_1, k_5)$. The $y$-coordinate of $w_2$ is set to $k_4$, i.e. $w_2 = (0,k_4)$. We set the distance between the vertices $f_{i_1}$ and $f_{i_{10}}$ for $1 \leq i \leq n$ to be $3k_2$ and the distance between $f_{i_{10}}$ and $f_{{i+1}_1}$ for $1 \leq i < n$ to be $k_2$. Refer to Figures~\ref{fig:variable2} and~\ref{fig:polynomial}. Recall that the $y$-coordinate of $f_{i_1}$, $f_{i_{10}}$ for $1 \leq i \leq n$ is $0$. We just defined the vertices $f_{i_1} = (4k_2(i-1),0)$ and $f_{i_{10}} = (3k_2 + 4k_2(i-1),0)$ for $1 \leq i \leq n$. Notice that all the coordinates defined so far are polynomial in $m$.

The $y$-coordinate of $f_{i_3}$ for $1 \leq i \leq n$ is set to $- k_0$. To determine its $x$-coordinate observe that $f_{i_3}$, $1 \leq i \leq n$ is an intersection point between $L(w_2, f_{i_1})$ and the horizontal line through $(0, -k_0)$. To proceed further recall the following.

The intersection point of two lines $L_1$ (defined by two distinct points $(x_1, y_1)$ and $(x_2, y_2)$) and $L_2$ (defined by two distinct points $(x_3, y_3)$ and $(x_4, y_4)$) can be written out as

\[(x,y)= \left( \frac{(x_1y_2-y_1x_2)(x_3-x_4) - (x_1-x_2)(x_3y_4-y_3x_4)}{(x_1-x_2)(y_3-y_4)-(y_1-y_2)(x_3-x_4)}, 
\frac{(x_1y_2-y_1x_2)(y_3-y_4) - (y_1-y_2)(x_3y_4-y_3x_4)}{(x_1-x_2)(y_3-y_4)-(y_1-y_2)(x_3-x_4)} \right).
\] 

This computation yields a fraction whose numerator is a polynomial of degree three in the input coordinates and whose denominator is a polynomial of degree two in the input coordinates. That is, if the input coordinates are $b$-bit numbers then the output coordinates need at most $5b$ bits to be represented. The proof of this can be found in lecture notes in Computational Geometry by M. Hoffmann~\cite{MichaelHoffmann2009}.

Thus, the $x$-coordinate of $f_{i_3}$, $1 \leq i \leq n$ can be determined via the above computation. Moreover, the number of bits required for its $x$-coordinate is at most $5$ times the number of bits used for $f_{i_1}$ (assuming $f_{i_1}$ had the most bits in its representation among $w_3$, $f_{i_1}$, $(0, -k_0)$ and $(1, -k_0)$), and it is still polynomial in $m$.

In a similar way we find the coordinates of other vertices that define wells. The vertex $f_{i_6}$ for $1 \leq i \leq n$ is an intersection of $L(w_1, (k_2+4k_2(i-1),0))$ and $L(f_{i_3}, f_{i_{10}})$; and the vertex $f_{i_7}$ for $1 \leq i \leq n$ is an intersection of $L(w_2, (2k_2+4k_2(i-1),0))$ and $L(f_{i_3}, f_{i_{10}})$. We set the $y$-coordinate of all the vertices at the bottom of every well to $- k_3$. Thus the $x$-coordinate of $f_{i_4}$ (respectively $f_{i_5}$, $f_{i_8}$ and $f_{i_9}$) for $1 \leq i \leq n$ is equal to the $x$-coordinate of the intersection point between the horizontal line through $(0,- k_3)$ and $L(w_2, f_{i_1})$ (respectively $L(w_1, (k_2+4k_2(i-1),0))$, $L(w_2, (2k_2+4k_2(i-1),0))$ and $L(w_1, f_{i_{10}})$). Notice that the number of bits required to represent the coordinates of each of the discussed vertices is polynomial in $m$ and is at most $5$ times the number of bits required to represent $f_{i_1}$ or $f_{i_{10}}$. The vertex $f_{i_2}$, $1 \leq i \leq n$ does not require precise construction; there is plenty of room to choose polynomial coordinates for it. 

\medskip
We proceed now with the construction of the clause junctions and after that we will return to discuss the spike formation inside the variable patterns.

Consider the first clause $C_1$ in the given $3CNF$ formula. Let $P_{C_1}$ be its clause junction. Let $z$ be an intersection point of the two lines $L(a_1, b_1)$ and $L(w_3, w_4)$ (refer to Figures~\ref{fig:clause} and~\ref{fig:polynomial}). We position $P_{C_1}$ in $P$ in such a way that the $x$-coordinate of $z$ is not smaller than the $x$-coordinate of $f_{n_{10}}$. Otherwise, we cannot guarantee that the towers of every literal pattern can see the vertices $f_{i_1}$, $f_{i_{10}}$ for $1 \leq i \leq n$. By construction the angle that $L(a_1, b_1)$ creates with the positive direction of $X$-axis is bigger than $\pi/3$. Thus if we set the $x$-coordinate of $v_{1_4}$ (where $v_{1_4}$ is the vertex $v_4$ of the first clause $C_1$) to be twice bigger than the $x$-coordinate of $f_{n_{10}}$, then $z$ is guaranteed to be to the right of $f_{n_{10}}$. Hence, we set $v_{1_4} = (2k_2(4n-1), k_6)$ (note that $f_{n_{10}} = (3k_2 + 4k_2(n-1),0) = (k_2(4n-1),0)$). Recall that in the beginning of this subsection we fixed the part of $P_{C_j}$ to the right of $L(v_{j_4}, v_{j_4}')$ for $1 \leq j \leq m$ and assigned polynomial coordinates to all the vertices of $P_{C_j}$ except for $v_{j_5}$ (assuming that $v_{j_4}$ is the origin). To add $P_{C_1}$ to $P$ we just translate all the vertices of $P_{C_1}$ (except for $v_{1_5}$) by vector $(2k_2(4n-1),k_5)$. The $y$-coordinate of $v_{1_5}$ is $k_5$. Its $x$-coordinate should be small enough for the towers inside the first literal of $C_1$ to see $f_{1_3}$. Since $k_5 > k_6 / 2$ there are plenty of polynomial choices for $v_{1_5}$. Notice that the number of bits in binary representation of the coordinates of all the vertices of $P_{C_1}$ is bounded by polynomial in $m$.

To add the clause junction $P_{C_2}$ of the second clause $C_2$ to $P$ we consider the line $L(w_3, v_{1_1})$. Refer to Fig.~\ref{fig:polynomial}. Let $h$ be the intersection point between $L(w_3, v_{1_1})$ and the horizontal line through $(0,k_6)$. The $y$-coordinate of $v_{2_4}$ is set to $k_6$, and its $x$-coordinate should be to the right of $h$. Since $k_5$ is bigger than $k_6-k_5$, the length of the line segment $\overline{w_3 v_{1_1}}$ is bigger then the length of $\overline{v_{1_1} h}$. Hence, if we set the $x$-coordinate of $v_{2_4}$ to be twice bigger then the $x$-coordinate of $v_{1_1}$ then $v_{2_4}$ is guaranteed to be to the right of $h$. Moreover, the number of bits of the $x$-coordinate of $v_{2_4}$ will be larger then the number of bits of the $x$-coordinate of $v_{1_1}$ by at most $1$. All other vertices of $P_{C_2}$ can be added similarly to the vertices of $P_{C_1}$ (via the translation by the vector $(2\{x$-coordinate of $v_{1_1}\},k_5)$). The vertex $v_{2_4}$ can be positioned close to $v_{1_1}$ or even at $v_{1_1}$.

We keep on adding clause junctions $P_{C_j}$ for $1 \leq j \leq m$ to $P$ in a similar manner. The number of bits that represent coordinates of $P_{C_j}$ grows by at most $2$ compared to $P_{C_{j-1}}$. It follows that $v_{m_1}$ requires at most $2m$ bits more than $v_{1_4}$, which is still polynomial in $m$. We set the $x$-coordinate of $w_4$ to be equal to the $x$-coordinate of $v_{m_1}$. The $y$-coordinate of $w_4$ is $0$.

\medskip
We are ready to discuss the construction of the spikes. Let us define an arrangements of lines as follows:
\begin{itemize}
\item[$\bullet$] For every clause $C_j$, $1 \leq j \leq m$ that contains the literal $l_q$, $1 \leq q \leq 3$ equal to the variable $u_i$, $1 \leq i \leq n$ create a pair of lines $L(x_{j_q}, f_{i_{10}})$ and $L(a_{j_q}, f_{i_6})$.
\item[$\bullet$] For every clause $C_j$, $1 \leq j \leq m$ that contains the literal $l_q$, $1 \leq q \leq 3$ equal to $\overline{u_i}$, $1 \leq i \leq n$ (the negation of the variable $u_i$) create a pair of lines $L(x_{j_q}, f_{i_6})$ and $L(a_{j_q}, f_{i_{10}})$.
\item[$\bullet$] For every variable $u_i$, $1 \leq i \leq n$ create a line $L(f_{i_1}, f_{i_2})$.
\end{itemize}

Notice that $x_{j_q}$, for $1 \leq j \leq m$, $1 \leq q \leq 3$ is an intersection point between $L(e_{j_q}, d_{j_q})$ and $L(c_{j_q}, b_{j_q})$. Both lines defined by the points with polynomial coordinates. Thus the number of bits required to represent the coordinates of $x_{j_q}$ is at most $5$ times bigger then the number of bits in the coordinate representation of any of the four points. Thus the coordinates of $x_{j_q}$ are also polynomial in $m$.

Consider the T-well of the variable pattern for $u_i$, $1 \leq i \leq n$ (refer to Fig.~\ref{fig:complete}). Every line of the line arrangement that contains $f_{i_{10}}$ will define a spike in this well. We assume that a variable cannot participate in the same clause more than once. Thus the number of spikes in a particular well can be at most $m$. We need to define a close neighbourhood in $P$ around $f_{i_{10}}$ from where the interior of every spike of the well will be visible. Extend the line segment $\overline{f_{i_9} f_{i_{10}}}$ towards $w_1$ until it first time hits a line in the arrangement (let $f'$ be that intersection point). Let  $f''$ be a midpoint of the line segment $\overline{f' f_{i_{10}}}$. It is important to emphasize that we do not calculate lengths of line segments as a Euclidean distance between the points, because is involves the usage of a square root. To find the coordinates of $f''$ we first find the coordinates of $f'$ (via the intersection of two lines defined by $4$ points with polynomial coordinates). The $y$-coordinate of $f''$ is a half of the difference between the $y$-coordinate of $f'$ and the $y$-coordinate of $f_{i_{10}}$. The $x$-coordinate of $f''$ is found in a similar way. 

We can choose $m$ different points on the line segment $\overline{f'' f_{i_{10}}}$ whose coordinates are polynomial in $m$. For every line $L$ in the arrangement that contains $f_{i_{10}}$ we do the following:

\begin{itemize}
\item[$\bullet$] If $L$ contains $a_{j_q}$ then create a line $L^*$ parallel to $L$ that passes through one of the $m$ points on $\overline{f'' f_{i_{10}}}$ and intersects the line segment $\overline{a_{j_q} b_{j_q}}$.
\item[$\bullet$] If $L$ contains $x_{j_q}$ then rotate $L$ clockwise around $x_{j_q}$ to contain one of the $m$ points on $\overline{f'' f_{i_{10}}}$. Create a line $L^*$ parallel to the updated $L$ such that it passes through $f_{i_{10}}$ and intersects the line segment $\overline{x_{j_q} b_{j_q}}$.
\end{itemize}

Every corresponding pair of lines $L$, $L^*$ defines a strip that in turn defines a spike. We choose the points on $\overline{f'' f_{i_{10}}}$ to define the pairs $L$, $L^*$ in such a way that no two different strips overlap to the left of $L(f_{i_7}, f_{i_8})$. 

Notice that $L$ is a line defined by a pair of points with polynomial coordinates; and $L^*$ is a translation of $L$ by a vector with polynomial coordinates. Thus $L^*$ can also be defined via a pair of points with polynomial coordinates. Every spike contains $4$ vertices. The coordinates of the two vertices of the spike that belong to $\overline{f_{i_7} f_{i_8}}$ are determined as the intersection points between $L(f_{i_7}, f_{i_8})$, $L$ and $L^*$. There are many polynomial choices in the intersection of the strip $L(f_{i_7}, f_{i_8})$, $L(f_{i_6}, f_{i_5})$ with $L$ and $L^*$ for the other two vertices of the spike.

In a similar way we define the spikes in the $F$-well of the variable pattern for $u_i$, $1 \leq i \leq n$. Notice that the coordinates of all the spikes in $P$ are polynomial in $m$.

\medskip
We showed the construction of a simple polygon $P$ that has $49m + 10n + 3$ vertices such that the binary representation of the coordinates of those vertices is bounded by a polynomial in $n$ and $m$. Together with Lemma~\ref{lem:minimum} this concludes the proof of Theorem~\ref{theo:NP}.

\section{Concluding Remarks}
\label{sec:conclusion}
\pdfbookmark[1]{Concluding Remarks}{sec:conclusion}

We showed that the problem of determining the minimum number of broadcast towers that can localize a point anywhere in $P$ is NP-hard even if the complete information about the polygon and the coordinates of all the towers are available to the point that wants to locate itself in the art gallery. To prove NP-hardness we showed a reduction from a known NP-complete problem $3SAT$ to our problem. We proved that the reduction takes polynomial time by showing that the number of vertices of the constructed polygon is polynomial in the size of the input and that the number of bits in the binary representation of the coordinates of those vertices is bounded by a polynomial in the size of the input. 

An obvious open question is whether the AGL problem is NP-complete. Assume that the vertices of the input polygon are rational numbers. The input then is represented by a finite bitstring of length $N$. The complexity measure is the number of bit-operations, as a function of $N$. To show that AGL is in NP, we must show the following: 

If the input polygon $P$ can be trilaterated by at most $K$ towers, then there exists a certificate $C$, such that: 

\begin{enumerate}
\item the number of bits to represent $C$ is polynomial in $N$.
\item Given $P$, $K$, and $C$, we can verify in time polynomial in $N$ that $C$ is a correct certificate. 
\end{enumerate}

A certificate would be the locations of the $K$ towers. The question then becomes if the total number of bits to represent these locations is polynomial in $N$. We do not know if this is the case. But we also do not know if AGL is not in NP.

%
%

\bibliography{NP-hard}

\end{document}